\newtheorem{theorem}{Theorem}
\newtheorem{lemma}{Lemma}
\newtheorem{corollary}{Corollary}
\newtheorem{proposition}{Proposition}
\newtheorem{ex}{Example}
\newenvironment{example-cont}[1]{\bigskip\noindent\textbf{Example~\ref{#1}.~(cont.)\hspace{\labelsep}}}{\bigskip\noindent}
\title{%
	Weight-Reducing Turing Machines%
	\footnote{%
		This work contains, in an extended form, some material and results
		which were previously presented in a preliminary form
		in conference papers~\cite{Pru14} and~\cite{GPPP18}.%
	}%
}
\author[,1]{Bruno Guillon%
	\protect\footnoteemail{bruno.guillon@uca.fr}%
}
\author[,2]{Giovanni Pighizzini%
	\protect\footnoteemail{pighizzini@di.unimi.it}%
	\protect\footnotesupport{Partially supported by Gruppo Nazionale per il Calcolo Scientifico (GNCS-INdAM).}%
}
\author[,2]{Luca Prigioniero%
	\protect\footnoteemail{prigioniero@di.unimi.it}%
}
\author[,3]{Daniel Pr\r u\v sa%
	\protect\footnoteemail{prusapa1@cmp.felk.cvut.cz}%
	\protect\footnotesupport{Supported by the Czech Science Foundation, grant 19-21198S.}%
}
\affil[1]{LIMOS, Université Clermont-Auvergne, France}
\affil[2]{Dipartimento di Informatica, Universit\`a degli Studi di Milano, Italy}
\affil[3]{Faculty of Electrical Engineering, Czech Technical University, Prague}
\date{}
\begin{document}
\maketitle
\begin{abstract}
	\noindent
	It is well-known that one-tape Turing machines working in linear time are no more powerful than finite automata, 
	namely they recognize exactly the class of regular languages.
	We prove that it is not decidable if a one-tape machine works in linear time,
	even if it is deterministic and restricted to use only the portion of the tape which initially contains the input.
	This motivates the introduction of a constructive variant of one-tape machines,
	called weight-reducing machine, and the investigation of its properties.
	We focus on the deterministic case.
	In particular, we show that,
	paying a polynomial size increase only,
	each weight-reducing machine
	can be turned into a halting one that works in linear time.
	Furthermore each weight-reducing machine
	can be converted into equivalent nondeterministic and deterministic finite automata
	by paying exponential and doubly-exponential increase in size, respectively.
	These costs cannot be reduced in general.
\end{abstract}
\vbox{}



\section{Introduction}
\label{sec:intro}

The characterization of classes of languages
in terms of recognizing devices
is a classical topic in formal languages and automata theory.
The bottom level of the Chomsky hierarchy,
\ie, the class of type~3 or regular languages,
is characterized in terms of deterministic and nondeterministic finite automata (\dfa\s and \nfa\s, respectively).
The top level of the hierarchy,
\ie, type~0 languages,
can be characterized by Turing machines
(in both deterministic and nondeterministic versions),
even in the one-tape restriction,
namely with a unique infinite or semi-infinite tape containing,
at the beginning of the computation,
only the input,
and whose contents can be rewritten to store information.

Considering machines that make a restricted use of space or time,
it is possible to characterize other classes of the hierarchy.
On the one hand,
if the available space is restricted only to the portion of the tape
which initially contains the input
and nondeterministic transitions are allowed,
the resulting model,
known as \emph{linear-bounded automaton},
characterizes type~$1$ or context-sensitive languages~\cite{Kur64}.
(The power does not increase when the space is \emph{linear} in the input length.)
On the other hand,
when the length of the computations,
\ie, the time, is linear in the input length,
one-tape Turing machines are no more powerful than finite automata,
namely they recognize only regular languages,
as proved by Hennie in 1965~\cite{Hen65}.%
\footnote{%
	Actually, the model considered by Hennie was deterministic.
	Several extensions of this result,
	including that to the nondeterministic case
	and greater time lower bounds for nonregular language recognition,
	have been stated in the literature~\cite{Tr64,Har68,Mic91,Pig09,TYL10}.%
}

The main purpose of this paper is the investigation of some fundamental properties of several variants of one-tape Turing machines working in linear time.
We now give an outline of the motivation for this investigation and of the results we present.
(\Cref{fig:machines} summarizes the models we are going to discuss and their relationships.)

\begin{figure}[tb]
	\centering
	\includegraphics[width=\textwidth]{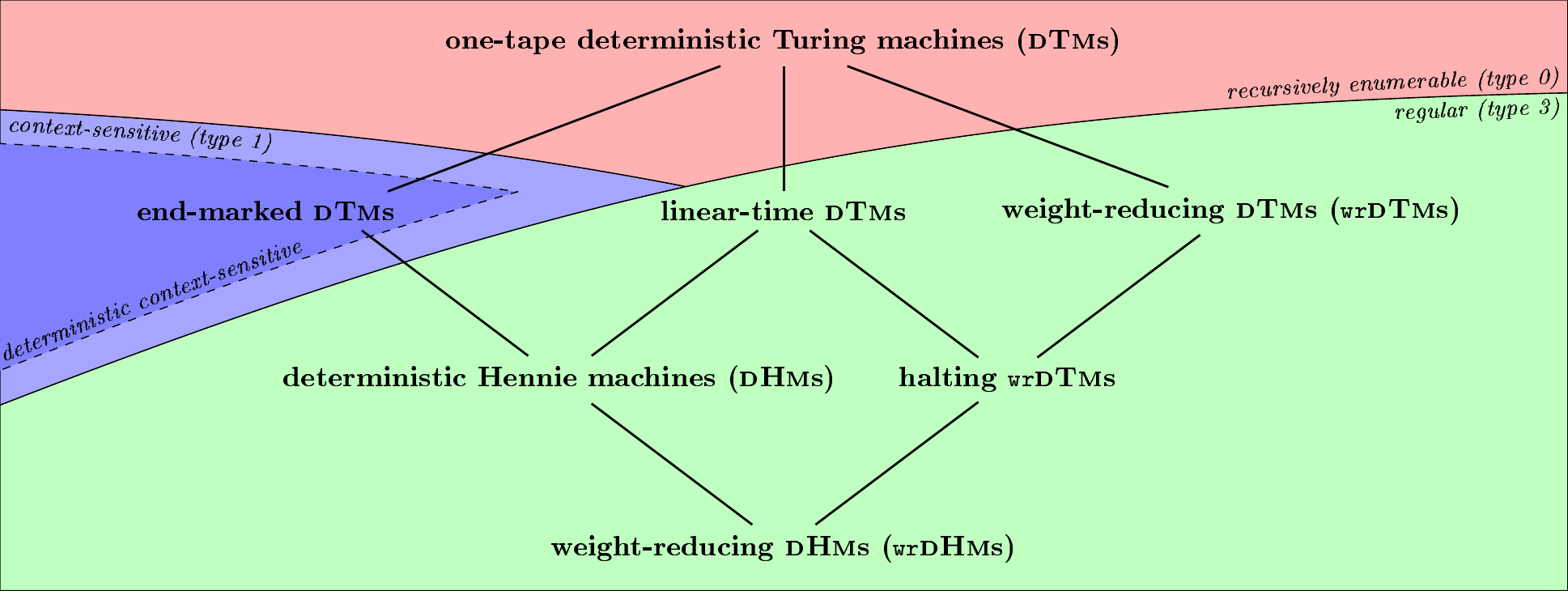}
	\caption{%
		Variants of one-tape deterministic Turing machines
		and their expressive power confronted with the Chomsky hierarchy.
		In particular,
		end-marked \dtm\s are known as \emph{deterministic linear bounded automata} in the literature,
		and recognize the so-called \emph{deterministic context-sensitive languages},
		a subclass of context-sensitive languages,
		see~\cite{Wal70}.
		It is still unknown if such inclusion is strict.
	}
	\label{fig:machines}
\end{figure}

A natural question concerning models
that share the same computational power
is the comparison of the sizes of their descriptions.
In this respect,
one could be interested in comparing one-tape Turing machines working in linear time
with equivalent finite automata.

Here, we prove that there exists no recursive function
bounding the size blowup resulting from the conversion
from one-tape Turing machines working in linear time
into equivalent finite automata.
Hence, one-tape linear-time Turing machines
can be arbitrarily more succinct than equivalent finite automata.
Furthermore, it cannot be decided whether or not a one-tape Turing machine works in linear time.%
\footnote{%
	For the sake of completeness,
	we mention that it is decidable whether or not
	a machine makes at most~$cm+d$ steps on input of length~$m$,
	for any \emph{fixed}~$c,d>0$~\cite{Gaj15}.%
}
These results remain true in the restricted case of \emph{end-marked machines}, 
namely one-tape deterministic Turing machines that do not have any extra space, besides the tape
portion which initially contains the input.
Deterministic end-marked machines working in linear time will be called \emph{Hennie machines}.

To overcome the above-mentioned ``negative'' results, 
we consider a syntactical restriction on one-tape deterministic Turing machines,
thus introducing \emph{weight-reducing Turing machines}. 
This restriction aims to enforce the machine to work in linear time,
by making any tape cell rewriting decreasing
according to some fixed partial order on the working alphabet.
However, due to the unrestricted amount of available tape space,
these devices can have non-halting computations.
Nevertheless, they work in linear time as soon as they are halting.
Indeed, we prove that each computation either halts within a time which is linear in the input length, or is infinite.
In the paper we show that it is possible to decide whether a weight-reducing Turing machine is halting.
As a consequence, it is also possible to decide whether it works in linear time.
Furthermore, with a polynomial size increase,
any such machine can be made halting whence working in linear time.
Our main result is that the tight size cost of converting
a weight-reducing Turing machine into a \dfa
is double exponential.
This cost reduces to a simple exponential
when the target device is an \nfa.

Considering end-marked Turing machines satisfying the weight-reducing syntactical restriction,
we obtain \emph{weight-reducing Hennie machines}. 
These devices do not allow infinite computations whence always work in linear time.
The above-stated double exponential blowup is easily extended to them.

The paper is organized as follows.
\Cref{sec:prel} presents the fundamental notions and definitions,
included those related to the computational models we are interested in.
\Cref{sec:undec} is devoted to prove the above-mentioned undecidability and non-recursive trade-off results
concerning Hennie machines.
In \cref{sec:wrtm-versus-wrhm},
after proving that it can be decided if a deterministic Turing machine is weight-reducing,
we describe a procedure that,
given a linear-time machine
together with the coefficient of its linear bound on time,
turns it into an equivalent weight-reducing machine.
Furthermore, we present a simulation of weight-reducing machines by finite automata, studying its size cost.
In \cref{sec:spaceTimeHalt}
we show how to decide if a weight-reducing machine halts on any input and if it works in linear time.
We also prove that by a polynomial increase in size,
each weight-reducing machine can be transformed into an equivalent one
which always halts and which works in linear time.

\section{Preliminaries}%
\label{sec:prel}

In this section we recall some basic definitions
and notations.
We also describe the main computational models considered in the paper.

We assume the reader familiar
with notions from formal languages
and automata theory
(see, \eg,~\cite{HU79}).
Given a set~$S$,
\defd{$\card S$} denotes its cardinality
and~\defd{$\powerset S$} the family of all its subsets.
Given an alphabet~\alphab,
\defd{$\length w$} denotes the length of a string~$w\in\alphab^*$,
\defd{$\ith w$} the~$i$\=/th symbol of~$w$, $i=1,\ldots,\length w$,
and~\defd{\emptyword} denotes the empty string.
\smallbreak

The main computational model we consider is the \defd{deterministic one-tape Turing machine} (\dtm).
Such a machine
is a tuple~$\struct{Q,\ialphab,\walphab,\delta,q_0,F}$
where~$Q$ is the \defd{set of states},
\ialphab is the \defd{input alphabet},
\walphab is the \defd{working alphabet}
including symbols of \ialphab
and the special \defd{blank symbol}, denoted by~$\tblank$,
that cannot be written by the machine,
$q_0\in Q$ is the \defd{initial state},
$F\subseteq Q$ is the \defd{set of final states},
and~%
$\delta: Q\times\walphab\to Q\times(\walphab\setminus\set\tblank)\times\moveset$
is the partial \defd{deterministic transition function}.
In one step,
depending on its current state~$p$
and on the symbol~$\sigma$ read by the head,
a \dtm changes its state to~$q$,
overwrites the corresponding tape cell with~$\tau$
and moves the head one cell to the left or to the right
according to~$\xmove = \lmove$ or~$\xmove = \rmove$, respectively,
if~$\delta(p,\sigma)=(q,\tau,\xmove)$.
Since~$\delta$ is partial,
it may happen that no transition can be applied.
In this case, we say that the machine \defd{halts}.
At the beginning of computation
the input string~$w$ resides on a segment of a bi-infinite tape,
called \defd{initial segment},
and the remaining infinitely many cells contain the blank symbol.
The computation over~$w$
starts in the initial state
with the head scanning the leftmost symbol of~$w$ if~$w\neq\emptyword$
or a blank tape cell otherwise.
The input is \defd{accepted}
if the machine eventually halts in a final state.
The language accepted by a \dtm~\mA is denoted by~\langof\mA.

Let~$\mA=\struct{Q,\ialphab,\walphab,\delta,q_0,F}$ be a \dtm.
A \defd{configuration} of~\mA
is given by the current state~$q$,
the tape contents,
and the position of the head.
If the head is scanning a non-blank symbol,
we describe it by~\defd{\config zqu}
where~$zu\in\walphab^*$ is the finite non-blank contents of the tape,
$u\neq\emptyword$,
and the head is scanning the first symbol of~$u$.
Otherwise, we describe it by~\defd{\config{}{q}{\tblank z}} or~\defd{\config zq{}}
according to whether the head is scanning
the first blank symbol to the left or to the right of the non-blank tape contents~$z$, respectively.
If the device may enter a configuration~$\config{z'}{q'}{u'}$
from a configuration~$\config zqu$ in one step,
we say that~$\config{z'}{q'}{u'}$ is \defd{a successor} of~$\config zqu$,
denoted~$\config zqu\yields\config{z'}{q'}{u'}$.
A \defd{halting configuration} is a configuration that has no successor.
The reflexive and transitive closure of~$\yields$
is denoted by~\defd{\yields*}.
On an input string~$w\in\ialphab^*$,
the \defd{initial configuration} is~$\config{}{q_0}{w}$.
An \defd{accepting configuration} is a halting configuration~$\config{z}{q_f}{u}$
such that~$q_f$ is a final state of the machine.
A \defd{computation} is a (possibly infinite) sequence of successive configurations.
It is \defd{accepting} if it is finite,
its first configuration is initial,
and its last configuration is accepting.
Therefore,~%
\[
	\langof\mA=\set{
		w\in\ialphab^*
		\mid
			\config{q_0}{w}{}
			\yields*
			\config z{q_f}u,\,
			\text{where $q_f\in F$ and $\config z{q_f}u$ is halting}
		}%
	\text.
\]
\bigbreak

In the paper we consider the following restrictions of \dtm\s (see \cref{fig:machines}).
\begin{description}
	\item[End-marked machines.]
		We say that a \dtm is \defd{end-marked},
		if at the beginning of the computation the input string is surrounded
		by two special symbols belonging to~$\walphab$,
		\defd{\lend} and~\defd{\rend} respectively,
		called the \emph{left} and the \emph{right endmarkers},
		which can never be overwritten,
		and that prevent the head to fall out the tape portion
		that initially contains the input.
		Formally, for each transition~$\delta(p,\sigma)=(q,\tau,\xmove)$,
		$\sigma=\lend$ (\resp,~$\sigma=\rend$) implies%
		~$\tau=\sigma$ and~$\xmove=\rmove$ (\resp,~$\xmove=\lmove$).
		This is the deterministic restriction
		of the well-known \emph{linear-bounded automata}%
		~\cite{Kur64}.
		For end-marked machines,
		the initial configuration on input~$w$
		is~$\config{}{q_0}{\lend w\rend}$.
	\item[Weight-reducing Turing machines.]
		A \dtm is \defd{weight-reducing} (\defd\wrdtm),
		if there exists a partial order~$<$ on~\walphab
		such that each rewriting is decreasing,
		\ie, $\delta(p,\sigma)=(q,\tau,\xmove)$ implies~$\tau<\sigma$.
		By this condition, in a \wrdtm the number of visits to each tape cell
		is bounded by a constant. However, one \wrdtm could have non-halting
		computations which, hence, necessarily visit infinitely many tape cells.
	\item[Linear-time Turing machines.]
		A \dtm is said to be \defd{linear-time} if
		over each input~$w$,
		its computation halts within~$\bigoof{\length w}$ steps.
	\item[Hennie machines.]
		A \defd{Hennie machine} (\defd\dhm)  
		is a linear-time \dtm which is,
		furthermore, end-marked.
	\item[Weight-Reducing Hennie machines.]
		By combining previous conditions,
		\defd{weight-reducing Hennie machines} (\defd\wrdhm) are defined as particular \dhm,
		for which there exists an order~$<$
		over~$\walphab\setminus\set{\lend,\rend}$
		such that $\delta(p,\sigma)=(q,\tau,\xmove)$ implies~$\tau<\sigma$
		unless $\sigma\in\set{\lend,\rend}$.
		Observe that each end-marked \wrdtm can execute 
		a number of steps which is at most linear in the length of the input.
		Hence,
		end-marked \wrdtm
		are necessarily weight-reducing Hennie machines.
\end{description}
We also consider finite automata.
We briefly recall their definition.
A \defd{nondeterministic finite automaton}
(\defd\nfa)
is a computational device
equipped with a finite control and a finite read-only tape
which is scanned by an input head in a one-way fashion.
Formally,
it is defined as a quintuple~$\mA=\struct{Q,\alphab,\delta,q_0,F}$,
where~$Q$ is a finite \defd{set of states},
\ialphab~is a finite \defd{input alphabet},
$q_0\in Q$ is the \defd{initial state},
$F\subseteq Q$ is a \defd{set of final states},
and~%
$%
\delta:%
Q\times\ialphab%
\rightarrow%
\powerset{Q}%
$
is a \defd{nondeterministic transition function}.
At each step,
according to its current state~$p$
and the symbol~$\sigma$ scanned by the head,%
~\mA enters one nondeterministically-chosen state from~$\delta(p,\sigma)$
and moves the input head one position to the right.
The machine \defd{accepts} the input
if there exists a computation
starting from the initial state~$q_0$
with the head on the leftmost input symbol,
and ending in a final state $q\in F$
after having read the whole input
with the head to the right of the rightmost input symbol.
The language accepted by~\mA is denoted by~\defd{\langof\mA}.
An \nfa~\mA is said to be \defd{deterministic} (\defd\dfa)
whenever~$\card{\delta(q,\sigma)}\leq1$,
for any~$q\in Q$ and $\sigma\in\ialphab$.

The notions of configurations, successors, computations, and halting configurations,
previously introduced in the context of \dtm\s,
naturally transfer to \nfa\s.
\smallbreak

The \defd{size} of a machine
is given by the total number of symbols used to write down its description.
Therefore, the size of a one-tape Turing machine is bounded by a polynomial
in the number of states and of working symbols.
More precisely,
the device is fully represented by its transition function
which can be written in size~$\thetaof{\card Q\cdot\card\walphab\cdot\log(\card Q\cdot\card\walphab)}$.
In the case of \nfa\s (\resp, \dfa\s),
since no writings are allowed
and hence the working alphabet is not provided,
the size is linear in the number of instructions and states,
which is bounded by a function quadratic (\resp, subquadratic) in the number of states
and linear in the number of input symbols.
In this case, the description has size~$\thetaof{\card\ialphab\cdot\card Q^2}$
(\resp, $\thetaof{\card\ialphab\cdot\card Q\cdot\log\card Q}$).

\section{Hennie Machines: Undecidability and Non-Recursive Trade-Offs}
\label{sec:undec}

In this section we investigate some basic properties of \dtm\s.
First of all, we prove that it cannot be decided whether an end-marked \dtm works in linear time or not.
As a consequence, it cannot be decided if a \dtm is a Hennie machine.
Since linear-time \dtm\s accept only regular languages~\cite{Hen65}, it is natural to investigate
the size cost of their conversion into equivalent finite automata.
Even in the restricted case of deterministic Hennie machines we obtain a ``negative'' result, by proving a non-recursive trade-off between the size of Hennie machines and that of the equivalent finite automata.

Let us start by proving the following undecidability result.

\begin{theorem}
\label{th:undec}
	It is undecidable whether an end-marked \dtm works in linear time.
\end{theorem}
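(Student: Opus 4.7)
The plan is to reduce from the halting problem for Turing machines on the empty input, which is classically undecidable. Given an arbitrary TM $M$, I would construct an end-marked \dtm $A_M$ such that $A_M$ works in linear time if and only if $M$ halts on the empty input; the undecidability of halting then transfers to the linear-time property.

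For the construction, $A_M$ on input of length $n$ would proceed in three phases. First, an $O(n)$-time setup phase initializes the working tape for the simulation. Second, $A_M$ performs a real-time simulation of $M$ on the empty input: since $M$ is fixed, its working alphabet and states can be hardwired into $A_M$ by a product construction, so each step of $M$ is simulated by exactly one step of $A_M$. The simulation terminates in exactly one of two ways: either $M$ halts, in which case $A_M$ halts after a bounded number of additional steps; or the simulated head of $M$ would cross one of the endmarkers of $A_M$, in which case $A_M$ enters a wasting phase that performs $\Theta(n^{2})$ tape sweeps before halting.

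The analysis splits into three cases. If $M$ halts on the empty input in $k$ steps using $s$ tape cells, then for every $n \geq s$ the simulation terminates via $M$'s halting transition within $k$ steps, giving $A_M$ total running time $O(n + k) = O(n)$; for the finitely many $n < s$, $A_M$ still halts within a number of steps depending only on $s$ and hence bounded by a constant. Overall $A_M$ is linear time. If $M$ does not halt but uses only bounded space $s$, then for every $n \geq s$ the simulation runs forever and $A_M$ fails to halt on infinitely many inputs, so $A_M$ is not linear. If $M$ does not halt and uses unbounded space, then for every $n$ the simulation eventually crosses an endmarker and triggers the quadratic wasting phase, so $A_M$ takes $\Theta(n^2)$ time on all sufficiently large inputs and is again not linear. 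Thus in both non-halting cases $A_M$ is not linear, while in the halting case it is; deciding the linear-time property would therefore decide halting, a contradiction.

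The main obstacle I anticipate is ensuring that the simulation phase truly incurs no overhead beyond one step of $A_M$ per step of $M$: any extra per-step cost growing with $n$ would push the halting case to super-linear time and break the reduction. This can be arranged because $M$ is a fixed, constant-sized object whose states and tape symbols may be absorbed into $A_M$'s finite control and working alphabet, so one step of $M$ corresponds to exactly one transition of $A_M$, and the timing argument above goes through.
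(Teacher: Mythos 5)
Your proposal is correct and follows essentially the same route as the paper: a reduction from halting on the empty input, where the end-marked machine simulates $M$ on $\emptyword$ inside its finite control and punishes any attempt to cross an endmarker with a $\Theta(n^2)$-step wasting phase, yielding constant/linear time exactly when $M$ halts. Your case analysis (halting; non-halting with bounded space; non-halting with unbounded space) matches the paper's, so no further comparison is needed.
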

\begin{proof}
	We show that the problem of deciding if a \dtm halts on the empty word~$\emptyword$ reduces to this problem.
	Let $\mTM=\struct{Q,\ialphab,\walphab,\delta,q_0,F}$ be a \dtm.
	Without loss of generality,
	assume that \mTM has a tape infinite only to the right.
	Construct an end-marked Turing machine~\mHM
	with the input alphabet $\set a$ as follows.
	Given an input~$v\in a^*$,
	\mHM starts to simulate~\mTM over~\emptyword.
	If, during the simulation, \mHM reaches the right endmarker,
	then it stops the simulation and performs additional~$\Theta({\length v}^2)$ computation steps.%
	\footnote{%
		This can be achieved,
		for example,
		by a sequence of steps which overwrites every tape cell of the initial segment
		with a special marker~$\sharp\notin\walphab$,
		moving the head to the right endmarker after each rewriting.
		More precisely,
		from the cell containing the right endmarker,
		\mHM moves its head to the preceding cell
		and overwrites the contents with the special marker.
		After that,
		it moves its head to the right endmarker,
		then moves it backward to the rightmost cell not containing the special marker
		and writes~$\sharp$,
		thus repeating the procedure until all tape cells
		but those containing the endmarkers have been overwritten with the special marker.
	}
	Otherwise,~\mHM continues the simulation of~\mTM and halts
	if~\mTM halts.
	One can verify that the construction yields the following properties.
	\begin{itemize}
		\item If~\mTM halts on~$\emptyword$ in time~$t$ visiting~$s$ tape cells,
			then~\mHM performs~$\bigoof{t}$ computation steps on any input of length greater than~$s$,
			while it performs~$\bigoof{t^2}$ steps on shorter inputs.
			In both cases, the time is bounded by a constant in the input length.
		\item If~\mTM does not halt on~\emptyword,
			then for any input~$v$ either the simulation reaches the right endmarker
			and then~\mHM performs further~$\Theta(\length v^2)$ computation steps,
			or it does not halt because~\mTM enters an infinite loop,
			without reaching such a tape cell.
			In both cases~\mHM is not a linear-time~\dtm.
	\end{itemize}
	This allows to conclude
	that~\mHM is a linear-time~\dtm
	if and only if~\mTM halts on input~\emptyword,
	which is known to be undecidable.
\end{proof}


We now show that the size
trade-off from linear-time \dtm to finite automata is not recursive. More precisely, we obtain a
non-recursive trade-off between the sizes of Hennie machines and finite automata.

\begin{theorem}
\label{th:nonRec}
	There is no recursive function bounding the size blowup when transforming \dhm\s to finite automata.
\end{theorem}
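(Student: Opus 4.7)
My plan is to reduce from the halting problem on the empty input, which is captured by~\Cref{th:undec}. For each \dtm $\mTM$, I would construct a \dhm $\mA$ of size linear in that of $\mTM$ whose accepted language's minimum equivalent finite automaton has size growing with the time $\mTM$ takes to halt on \emptyword (when it halts), and of constant size otherwise.

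The target language is $L_\mTM = \ialphab^* \setminus \{w_\mTM\}$, where $w_\mTM$ is a suitable encoding of the unique halting computation of $\mTM$ on \emptyword; if $\mTM$ never halts on \emptyword, set $L_\mTM = \ialphab^*$. A standard Myhill--Nerode argument shows that distinct prefixes of $w_\mTM$ are pairwise inequivalent, so the minimum \dfa for $\ialphab^* \setminus \{w_\mTM\}$ has exactly $\length{w_\mTM}+2$ states, while any \nfa still needs $\Omega(\length{w_\mTM})$ states. In the non-halting case both are trivially one state.

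Then I would build the \dhm $\mA$ recognizing $L_\mTM$. Its finite control hardcodes $\mTM$'s transitions, giving size $\bigoof{|\mTM|}$. The machine accepts iff its input fails a linear-time verification that it encodes a valid halting computation of $\mTM$. The key technical point is the choice of encoding: successive configurations must be laid out (\eg, in alternating orientations such as $c_0 \# c_1^R \# c_2 \# c_3^R \# \cdots$, or via interleaving) so that each transition check reduces to a local bounded-window comparison executable with linear total head movement. I expect designing this single-tape linear-time verifier to be the main obstacle, as it relies on the classical encoding tricks that reduce Turing-machine verification to a local check.

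Finally, suppose for contradiction that $f$ is a recursive function bounding the \dhm-to-finite-automaton size blowup. Since $\mA$ is a \dhm regardless of whether $\mTM$ halts (its verification algorithm is linear-time by construction), $L_\mTM$ admits a finite automaton of size at most $N := f(|\mA|)$, whence at most $N$ states. When $\mTM$ halts on \emptyword in $t$ steps, the lower bound $\Omega(\length{w_\mTM}) = \Omega(t)$ forces $t = \bigoof N$. Thus simulating $\mTM$ on \emptyword for $\bigoof N$ steps decides whether it halts, contradicting~\Cref{th:undec}.
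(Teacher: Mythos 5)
Your reduction from the halting problem is a genuinely different route from the paper's proof, which for each~$n$ builds a \dhm of size~$\bigoof n$ that directly \emph{runs} an $n$-state busy beaver and accepts exactly the unary word whose length equals the space the busy beaver uses, so that the minimal \dfa needs a non-recursively large number of states. However, your argument has a real gap exactly where you flag it, and the difficulty is worse than you suggest. No ``local bounded-window'' layout of consecutive configurations is available on a single tape: in $c_0\# c_1^R\# c_2\#\cdots$ the symbols to be compared in $c_i$ and~$c_{i+1}$ sit at distance up to~$\length{c_i}$, so one comparison costs $\Theta(\length{c_i}\cdot\length{c_{i+1}})$ by zigzagging, and on the honest input the total work is $\Theta(t^3)$ for an input of length~$\Theta(t^2)$ --- super-linear in the input length. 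The construction can be rescued in the halting case, but only by an argument you do not give: verify incrementally from~$c_0$ (whose length is~$\bigoof1$), check block lengths before contents, and abort at the first error; then every input either equals~$w_{\mTM}$ or reveals an error within its first $t+2$ blocks, so the running time is at most linear in~$n$ plus an additive constant~$\bigoof{t^3}$ depending on~\mTM, which is acceptable only because the definition of linear time tolerates an arbitrary, non-recursive additive constant --- exactly the feature the paper's busy-beaver machine exploits.

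Second, your claim that \mA is a \dhm \emph{regardless} of whether \mTM halts is false for any such verifier: when \mTM does not halt on~\emptyword, the inputs encoding the first~$j$ configurations of the infinite computation exist for every~$j$, contain no error except that their last block is not halting, and force the verifier to perform all~$j$ consistency checks, \ie, $\Theta(n^{3/2})$ work on inputs of unbounded length~$n$; hence \mA is not linear-time in that case. Fortunately your final step does not need this: to refute a recursive bound~$f$ it suffices that \mA be a \dhm \emph{whenever \mTM halts}, since the procedure ``simulate \mTM on \emptyword for a number of steps computed from~$f$ applied to the size of~\mA'' is correct in both cases. Note also that the contradiction obtained is with the undecidability of the halting problem on~\emptyword (the problem reduced from in the proof of \cref{th:undec}), not with the statement of \cref{th:undec} itself. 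With these repairs the strategy goes through, but the heavy lifting --- the single-tape linear-time verifier and the analysis of its behaviour on malformed inputs --- is precisely what the paper avoids by simulating a machine rather than checking a purported computation history.
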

\begin{proof}
	We recall that a busy beaver is an~$n$-state deterministic Turing machine
	with a two-symbol working alphabet $\set{\tblank,1}$
	that,
	starting its computation over a blank tape,
	halts after writing the maximum possible number~$\bb(n)$
	of $1$'s for its number of states.
	The function $\bb(n)$ of the space used by an~$n$-state busy beaver
	is known to be non-recursive,
	\ie, it grows asymptotically faster than any computable function~\cite{Rad62}.

	Here we consider a modification of the busy beaver
	that operates on a semi-infinite tape (instead of bi-infinite)
	and starts the computation on the leftmost cell,
	according to~\cite{Wal82}.
	This variant
	defines a different function~$\bb(n)$ which is also non-recursive.
	
	For each $n>0$,
	let $w_n$ be the string over $\set{a}$ of length $\bb(n)$ and let~$L_n=\set{w_n}$.
	This language is accepted by an end-marked \dtm \mHM[n]
	with $\bigoof{n}$ states and $\bigoof{1}$ working tape symbols,
	which simulates a given $n$-state busy beaver ($n$-$\BB$)
	and accepts an input $w\in a^*$ if and only if the space used by $n$-$\BB$ equals $\length w$.
	When~$n$-$\BB$ uses more than~$\length w$ space,
	at some point during the simulation the right endmarker is reached.
	At that point the simulation is aborted and the machine rejects.
	Furthermore, the simulation of~$n$-$\BB$ does not depend on the input.
	Hence, it is made in constant time. 
	This allows to conclude that,
	with respect to the input length,
	\mHM[n] works in linear time,
	so it is a Hennie machine.

	On the other hand,
	it is not difficult to see that the minimum \dfa accepting $L_n$ contains a path of~$S(n)+1$ states.
	This completes the proof.
\end{proof}

\section{Weight-Reducing Machines: Decidability, Expressiveness and Descriptional Complexity}
\label{sec:wrtm-versus-wrhm}
%

In \cref{sec:undec} we proved that it cannot be decided whether an end-marked \dtm works in linear time.
In this section we show that it is possible to decide whether \dtm\s are weight reducing or not.
Furthermore,
every linear-time \dtm~\mTM with the length of each computation bounded by $Kn+C$,
where $K,C$ are constants and~$n$ denotes the input length,
can be transformed into an equivalent weight-reducing machine
whose size is bounded by a recursive function of $K$ and the size of~\mTM. 

We also present a simulation of weight-reducing machines by finite automata, thus concluding that weight-reducing machines 
express exactly the class of regular languages.
From such a simulation, we will obtain the size trade-off between weight-reducing machines and finite automata which, hence,
is recursive.
This contrasts with the non-recursive trade-off from Hennie machines to finite automata, proved in \cref{sec:undec}.

\begin{proposition}
	\label{prop:is_wr?}
	It is decidable
	whether a \dtm is weight-reducing or not.
\end{proposition}
\begin{proof}
	Let $\mTM=\struct{Q,\ialphab,\walphab,\delta,q_0,F}$ be a \dtm.
	To decide if there is any order~$<$ on~\walphab
	proving that \mTM is weight-reducing,
	it suffices to check whether the directed graph $G=\struct{\walphab,E}$,
	with
	\[E=\set{(\tau,\sigma) \mid \exists p,q\in Q \,\, \exists d\in\moveset: \delta(p,\sigma)=(q,\tau,d)}\text,\]
	is acyclic
	(each topological ordering of~$G$ acts as the required order~$<$).
\end{proof}

We now study how linear-time \dtm\s can be made weight-reducing.
To this end,
we use the fact that each \dtm working in linear time makes a constant number of visits to each tape cell,
hence linear time implies a constant number of visits per tape cell.
This property is stated in the following lemma,
which derives from~\cite[Proof of Theorem 3]{Hen65}.

\begin{lemma}
	\label{lemma:linear->constant}
	Let $\mTM=\struct{Q,\ialphab,\walphab,\delta,q_0,F}$ be a \dtm.
	If there exist two constants~$K$ and~$C$
	such that every computation of \mTM
	has length bounded by $Kn+C$,
	where~$n$ denotes the input length,
	then \mTM never visits a tape cell more than $2K\cdot(\card Q)^K+K$ times.
\end{lemma}

The following lemma,
which will be used in this section to study trade-offs
between the computational models we are investigating and finite automata,
presents a transformation from linear-time Turing and Hennie machines
into equivalent weight-reducing ones.

\begin{lemma}\label{lemma:constantly-many-visits}
	Let $\mTM=\struct{Q,\Sigma,\Gamma,\delta,q_0,F}$ be a 
	\dtm such that, for any input, \mTM performs at most $k$ computation steps on each tape cell.
	Then there is a \wrdtm $\mA$ accepting $L(\mTM)$
	with the same set of states~$Q$ as~\mTM
	and working alphabet of size $\bigoof{k\cdot\card\Gamma}$.
	Furthermore,
	on each input \mA uses the same space as~\mHM.
	Hence,
	if~$\mTM$ is linear time or end-marked
	then so is~$\mA$.
\end{lemma}
\begin{proof}
	To obtain~$\mA$, we incorporate a counter into the working alphabet of~\mTM.
	For each scanned cell, the counter says
	what is the maximum number of visits $\mA$ can perform
	during the remaining computation steps over the cell.
	More formally, denoting by $\Sigma_{\tblank\,}$ the alphabet~$\Sigma\cup\{\tblank\}$,
	we define $\mA=\struct{Q,\Sigma,\Gamma',\delta',q_0,F}$
	with $\Gamma'=\Sigma_{\tblank\,}\cup\left((\Gamma\setminus\Sigma_{\tblank\,})\times\left\{0,\ldots,k-1\right\}\right)$
	and, for all $q,q'\in Q$, $a,a'\in\Gamma$, $d\in\{-1,+1\}$
	where $\delta(q,a)=(q',a',d)$, $\delta'$ fulfils
	\begin{alignat*}{3}
		&\delta'\left(q,a\right)  \,=\, \left(q',\left(a',k-1\right),d\right), && \quad\text{if }a\in\Sigma_{\tblank\,},\\
		&\delta'\left(q,\left(a, i\right)\right)  \,=\, \left(q',\left(a',i-1\right),d\right), && \quad\text{otherwise, for~}i=1,\ldots,k-1.
	\end{alignat*}
	Using an ordering~$<$ on $\Gamma'$ such that
	\begin{alignat*}{3}
		(a,i) & \,<\, b && \text{for all } a\in \left(\Gamma\setminus\Sigma_{\tblank\,}\right), b\in\Sigma_{\tblank\,}, i = 0,\ldots, k-1 \text{, and} \\
		(a,i) & \,<\, (b,j) \quad && \text{for all } a,b \in \left(\Gamma\setminus\Sigma_{\tblank\,}\right), i,j=0,\ldots, k-1, i < j\text,   
	\end{alignat*}
	it is easy to see that $\mA$ is a \wrdtm equivalent to $\mTM$.
	Furthermore, there is a natural bijection between computations of $\mTM$ and those of \mA, which preserves time (length of computations) and space (cells visited during the computation). Thus, if \mTM is linear time or end-marked, then so is \mA.
\end{proof}

By combining the above lemmas,
we obtain a procedure to convert linear-time Turing machines
into equivalent linear-time weight-reducing machines,
as soon as a linear time bound of the input device is explicitly given.
\begin{theorem}
	\label{thm:linear->weight-reducing}
	Let $\mTM=\struct{Q,\ialphab,\walphab,\delta,q_0,F}$ be a \dtm.
	If there exist two constants~$K$ and~$C$
	such that every computation of \mTM
	has length bounded by $Kn+C$,
	where~$n$ denotes the input length,
	then there is an equivalent linear-time \wrdtm
	with the same set of states~$Q$ as~\mTM
	and working alphabet of size $\bigoof{k\cdot\card\Gamma}$,
	where~$k=2K\cdot(\card Q)^K+K$.
\end{theorem}
\begin{proof}
	Direct consequence of \cref{lemma:linear->constant,lemma:constantly-many-visits}.
\end{proof}
\bigbreak

We now investigate the transformation of weight-reducing machines into equivalent finite automata and its cost.

\begin{theorem}\label{th:wrdtm-to-1dfa-upper-bound}
	For every \wrdtm $\mTM=\struct{Q,\Sigma,\Gamma,\delta,q_0,F}$ there exist an \nfa and a \dfa accepting $L(\mTM)$ with \mbox{$2^{\bigoof{\card\Gamma\cdot\log(\card Q)}}$} and  \mbox{$2^{2^{\bigoof{\card\Gamma\cdot\log(\card Q)}}}$} states, respectively.
\end{theorem}

\begin{proof}
	Assume $\mTM$ 
	always ends each accepting computation with the head scanning a tape cell to the right of the initial
	segment. This can be obtained, at the cost of introducing one extra symbol in the working alphabet,
	by modifying the transition function in such a way that when $\mTM$ enters a final
	state it starts to move its head to the right, ending when a blank cell is reached.
	Denote $n=\card Q$ and $m=\card\Gamma$.
	
	We describe an \nfa $\mA=\struct{Q',\Sigma,\delta',q_{\mathrm{I}},F'}$ which accepts $L(\mTM)$,
	working on the principle of guessing time-ordered sequences of states
	in which $\mTM$ scans each of the tape cells storing the input,
	together with the input symbol of the cell.
	This is a variant of the classical crossing sequence argument.
	In this case, for a tape cell~$C$,
	we consider the sequence of states in which the cell is scanned during a computation,
	while a crossing sequence is defined
	as the sequence of states of the machine when the border between two adjacent tape cells is crossed by the head.
	
	Suppose that the time-ordered sequence of states in which a cell~$C$ is scanned in a computation~$\rho$ is~$(q_1,\ldots,q_k)$.
	Due to the weight-reducing property,
	there are~$k$ or~$k+1$ the different contents of~$C$ in~$\rho$,
	depending on whether or not the computation stops in~$q_k$.
	Since the 
	working alphabet 
	consists of~$m$ symbols,
	we can conclude that~$k\leq m$.	
	
	The set of states $Q'$ thus consists of a special initial state $q_{\mathrm{I}}$,  
	a special final state $q_{\mathrm{F}}$, and all sequences of the form $(a,q_1,\ldots,q_k)$ where $a\in\Sigma \cup \{\tblank\}$, $1\le k \le m$, and $q_i\in Q$, for $i=1,\ldots,k$.
	
	
	Let $w\in\Sigma^+$ be a non-empty input.
	Let $\tau_l$, $\tau_{in}$ and $\tau_r$ denote the portion of \mTM's tape which initially stores the blank symbols preceding $w$, the input $w$, and the blank symbols to the right of~$w$, respectively.
	Let $\rho=(\mathcal{C}_0,\mathcal{C}_1,\ldots)$ be the computation of $\mTM$ over $w$.
	Let $q(j)$ denote the state of $\mTM$ in configuration~$\mathcal{C}_{j}$. Similarly, let $a(j)$ denote the symbol scanned by $\mTM$ in $\mathcal{C}_{j}$.
	For a tape cell $C$ in $\tau_{in}$, let $\mathcal{C}_{j_1},\ldots,\mathcal{C}_{j_k}$, where $j_1<\cdots<j_k$, be the sequence of all configurations in which \mTM scans $C$.
	Observe that $a(j_1)$ and $q(j_1),\ldots,q(j_k)$ determine $a(j_i)$ for all $i=2,\ldots,k$. For each configuration $\mathcal{C}_{j_i}$, it is also clear from which direction the head entered $C$ and in which direction it moves out of it ($\mathcal{C}_{j_1}$ is always entered from the left neighbouring cell, with the only exception of the initial configuration $\mathcal{C}_1$, which is indicated by the initial state $q_0$ that is never re-entered; for $i>1$, $\mathcal{C}_{j_i}$ is entered from the opposite direction than $\mathcal{C}_{j_{i-1}+1}$ was entered). For this reason, we can determine for two neighbouring cells $C_1$ and $C_2$ of $\tau_{in}$ whether two sequences of states assigned to them are consistent with $\mTM$ in the sense that the rightward head movements outgoing from $C_1$ have correspondent incoming leftward movements to $C_2$ and vice versa.
	Similarly, we can determine whether a sequence of states assigned to the first and to the last cell of $\tau_{in}$ is consistent with the computation of \mTM performed over the cell of $\tau_l$ and $\tau_r$, respectively.
	
	We are now going to formalize these ideas.
	
	Given $(a,q_1,\ldots,q_k), (b,p_1,\ldots,p_{\ell})\in Q'$, with~$a,b\in\Sigma$, let:
	\begin{itemize}
		\item $a_0=a$ and, for~$i=1,\ldots,k$, $\delta(q_i,a_{i-1})=(q'_i,a_i,d_i)$, $q'_i\in Q$, $a_i\in\Gamma$, $d_i\in\{-1,+1\}$;
		\item $b_0=b$ and, for~$i=1,\ldots,\ell$, $\delta(p_i,b_{i-1})=(p'_i,b_i,e_i)$, $p'_i\in Q$, $b_i\in\Gamma$, $e_i\in\{-1,+1\}$.
	\end{itemize}
	
	We say that~$(b,p_1,\ldots,p_{\ell})$ is \emph{consistent} with~$(a,q_1,\ldots,q_k)$ when
	there are indices~$i_1,i_2,\ldots,i_t$, $h_1,h_2,\ldots,h_t$, for some odd integer~$t\geq 1$,
	with~$1\leq i_1\leq i_2\leq\cdots\leq i_t=k$,
	$1=h_1\leq h_2\leq\cdots\leq h_t\leq\ell$,
	such that for~$j=1,\ldots,t$ it holds that:
	\begin{itemize}
		\item if~$j$ is odd then~$q'_{i_j}=p_{h_j}$, $d_{i_j}=+1$, and, when $j<t$, $i_j<i_{j+1}$,
		\item if~$j$ is even then~$p'_{h_j}=q_{i_j}$, $e_{h_j}=-1$, and~$h_j<h_{j+1}$,
	\end{itemize}
	while~$d_i=-1$ for~$i\notin\{i_1,\ldots,i_t\}$, and~$e_h=+1$ for~$h\notin\{h_1,\ldots,h_t\}$.

	Notice that, for any odd~$j$, in the transition from~$q_{i_j}$ to~$p_{h_j}$ the head crosses the border
	between the two adjacent cells by moving from left to right, while for any even~$j$ in the transition 
	from~$p_{h_i}$ to~$q_{j_i}$ the head crosses the same border by moving in the opposite direction.
	Furthermore, the conditions $t$ odd, $i_t=k$, $h_1=1$, derive from the fact that we are
	considering only sequences that could occur in accepting computations of~$\mTM$. In such
	computations, each cell of the initial segment
	is entered from the left  (with the exception of the leftmost one) and is finally left by moving the head to the right.
	(See \cref{fig:crossing} for an example).
\begin{figure}[tb]
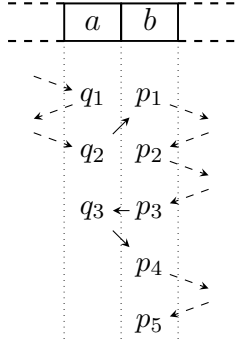

	\centering
	\includestandalone[mode=tex,draft=false]{crossing}
	\caption{An example where~$(b,p_1,\ldots,p_5)$ is consistent with~$(a,q_1,q_2, q_3)$. Notice that $t=3$, $i_1=2$, $i_2=i_3=3$, $h_1=1$, $h_2=3$, $h_3=4$.}
\label{fig:crossing}

\end{figure}
	\medskip
	
	In a similar way, we are going to identify the sequences from~$Q'$ which can occur on the
	rightmost cell of the initial segment in an \emph{accepting} computation
	(remember that we suppose that when~$\tau$ reaches a final state it starts to move
	its head to the right, ending when a blank cell is reached).
	Using the above notations, we say that \emph{the blank tape segment is consistent}
	with~$(a,q_1,\ldots,q_k)\in Q'$
	when there are indices~$1\leq i_1<i_2<\cdots<i_t=k$,
	and strings~$\gamma_0=\emptyword, \gamma_1, \ldots, \gamma_t\in\Gamma^*$, such that:
	\begin{itemize}
		\item $d_{i_j}=+1$, for~$j=1,\ldots,t$, while~$d_i=-1$ for~$i\notin\{i_1,\ldots,i_t\}$, namely,
		   the indices $i_j$ correspond to transitions moving the head to the right.
		\item For~$j=1,\ldots,t-1$, $\mTM$ in the state~$q_{i_j}$ with the head scanning a tape cell~$C$ 
		containing~$a_{{i_j}-1}$ and the string~$\gamma_{j-1}$ written on the non-blank cells to the right
		of~$C$, 
		moves to the right and makes a finite sequence of moves, which ends when the cell~$C$ is re-entered.
		At this point the state~$q_{i_j+1}$ and the string written on the non-blank cells to the right of~$C$
		is~$\gamma_j$.	
		\item From $q_{i_t}=q_k$ the machine~$\mTM$ moves its head to the right and, at some point,
		reaches a blank cell in a final state, without re-entering the cell~$C$ in between.
	\end{itemize}
	Let us denote by~$Q'_R$ the set of states~$(a,q_1,\ldots,q_k)\in Q'$ such that
	the blank tape segment is consistent with~$(a,q_1,\ldots,q_k)$.
	
\medskip
	
	Finally,
	we are now going to identify the sequences from~$Q'$ which are consistent with~$\tau_l$,
	namely sequences that could occur, in accepting computations, on the tape cell which initially 
	contains the leftmost input symbol.
	
	Given~$(a,q_1,\ldots,q_k)\in Q'$, with~$a\in\Sigma$ and, as before,
	$a_0=a$ and, for~$i=1,\ldots,k$, $\delta(q_i,a_{i-1})=(q'_i,a_i,d_i)$, $q'_i\in Q$, $a_i\in\Gamma$, $d_i\in\{-1,+1\}$,
	let~$1\leq i_1<i_2<\cdots<i_t<k$ be the indices corresponding to transitions
	moving the head to the left,
	\ie, $\{i_1,i_2,\ldots,i_t\}=\{i\mid d_i=-1\}$.
	We say that~$(a,q_1,\ldots,q_k)$ is \emph{consistent with the blank tape segment} when there are
	strings~$\gamma_0=\emptyword, \gamma_1, \ldots, \gamma_t\in\Gamma^*$, such that:
	\begin{itemize}
		\item $q_1$ is the initial state of~$\mTM$,
		\item for~$j=1,\ldots,t$, $\mTM$ in the state~$q_{i_j}$ with the head scanning a tape cell~$C$ 
		containing~$a_{{i_j}-1}$ and the string~$\gamma_{j-1}$ written on the non-blank cells to the left of~$C$, 
		moves to the left and makes a finite sequence of moves, up to re-enter~$C$.
		At that point the state is~$q_{i_j+1}$ and the string written on the non-blank cells to the left 
		of~$C$ is~$\gamma_j$.	
	\end{itemize}
	Let~$Q'_L$ denote the set of states from~$Q'$ which are consistent with the blank tape segment.
	
	\medskip

	At this point we developed all the tools to define the automaton~$\mA=\struct{Q',\Sigma,\delta',q_{\mathrm{I}},F'}$.
	\begin{itemize}
		\item The set of states is
			\[Q'=\{q_{\mathrm{I}},q_{\mathrm{F}}\}\cup\{(a,q_1,\ldots,q_k)\mid a\in\Sigma \cup \{\tblank\}, 1\le k \le m+1,  q_i\in Q,  i=1,\ldots,k\}\]
			as already mentioned.
		\item The initial state is $q_{\mathrm{I}}$.
		\item The transition function is defined,
			for $(a,q_1,\ldots,q_k)\in Q'$, $c\in\Sigma$, as
			\begin{eqnarray*}
			\lefteqn{\delta'((a,q_1,\ldots,q_k), c) =}\\		
			& &\left\{\begin{array}{ll}
				\{ (b,p_1,\ldots,p_{\ell}) \mid
				(b,p_1,\ldots,p_{\ell})\text{ is}&\\
				\mbox{~~consistent with }(a,q_1,\ldots,q_k) \}, 
				& \text{if } c=a\text{ and }(a,q_1,\ldots,q_k)\notin Q'_R\\
				\{ (b,p_1,\ldots,p_{\ell}) \mid
				(b,p_1,\ldots,p_{\ell})\text{ is}&\\
				\mbox{~~consistent with }(a,q_1,\ldots,q_k) \}\cup\{q_{\mathrm{F}}\}, 
				& \text{if } c=a\text{ and }(a,q_1,\ldots,q_k)\in Q'_R\\
				\emptyset,& \text{otherwise,}
			\end{array}\right.
			\end{eqnarray*}

			The transitions from the initial state~$q_{\mathrm{I}}$ are defined, for~$a\in\Sigma$,
			as
			\[\delta'(q_{\mathrm{I}},a)=\bigcup_{(a,q_1,\ldots,q_k)\in Q'_L}\delta'((a,q_1,\ldots,q_k), a)\,,\]
while there are no transitions from~$q_{\mathrm{F}}$, namely~$\delta'(q_{\mathrm{F}},a)=\emptyset$, for~$a\in\Sigma$.
		\item The set of final states is:
			\[F' =
			\left\{
				\begin{array}{ll}
					\{q_{\mathrm{F}}\}  & \text{if } \emptyword\notin{L(\mTM)} \\
					\{q_{\mathrm{I}},q_{\mathrm{F}}\},& \text{otherwise.}
				\end{array}\right.
				\]
	\end{itemize}
	
	By summarizing, \mA simulates~\mTM as follows:
	\begin{itemize}
		\item In the initial state~$q_{\mathrm{I}}$, reading an input symbol~$a$, $\mA$~implicitly
		guesses a sequence~$(a,q_1,\ldots,q_k)\in Q'_L$ and a sequence~$(b,p_1,\ldots,p_{\ell})$
		consistent with it, where~$b$ is supposed to be the symbol in the cell immediately to the right.
		The final state~$q_{\mathrm{F}}$ can be also guessed, if~$(a,q_1,\ldots,q_k)\in Q'_R$,
		\item When scanning an input cell containing a symbol~$a$, in a state $(a,q_1,\ldots,q_k)\in Q'$,
		$\mA$ guesses a sequence~$(b,p_1,\ldots,p_\ell)$ consistent with it.
		If the next input symbol is~$b$, then the simulation can continue in the same way, otherwise
		it stops because of an undefined transition.
		
		Furthermore, when $(a,q_1,\ldots,q_k)\in Q'_R$, $\mA$ can also guess to have reached the 
		last input symbol, so entering the final state~$q_{\mathrm{F}}$.
		If the end of the input is effectively reached then~$\mA$ accepts.
	\end{itemize}  
	 
	The number of states of~\mA is $2+ \left(\card\Sigma+1\right) \sum_{i=1}^mn^i=2^{\bigoof{m \log n}}$.
	If~\mA is in turn transformed to an equivalent \dfa,
	using the classical powerset construction\xspace
	,
	the resulting automaton has $2^{2^{\bigoof{m \log n}}}$ states.
\end{proof}

As a direct consequence of \cref{th:wrdtm-to-1dfa-upper-bound}, we get that \wrdtm\s
recognize exactly the class of regular languages.
\begin{corollary}
	A language is regular
	\ifof
	it is accepted by some~\wrdtm.
\end{corollary}

\Cref{th:wrdtm-to-1dfa-upper-bound} gives a double exponential upper bound for the size cost of the simulation of \wrdtm\s by \dfa\s. 
We now also prove a double exponential lower bound.

To this end, for each integer $n\geq0$,
we consider the language~$B_n$ over $\set{0,1,\$}$
consisting of strings $v_1\$v_2\$\cdots \$v_k$,
where~$k>2$,
$v_1,v_2,\ldots,v_k\in\set{0,1}^{\ast}$,
$\length{v_k}\leq n$,
$\length{v_i}\geq\length{v_k}$ for~$i=1,\ldots,k-1$,
and there exists~$j<k$ such that~$v_j=v_k$.
Informally, every string in $B_n$ is a sequence of binary blocks
which are separated by the symbol $\$$,
where the last block is of length at most~$n$
and it is a copy of one of the preceding blocks,
which all are at least as long as the last one.
For example,
$$v_1\$v_2\$v_3\$v_4\$v_5\$v_6=0011\$0101110\$011\$0011\$001\$011 \in B_4$$
since $\length{v_6}\le 4$, $\length{v_i}\geq\length{v_6}$ for~$i=1,\ldots,5$, and~$v_3=v_6$.

\begin{lemma}
\label{lemma:wrdtmBn}	
	For every integer $n \geq 0$,
	the language $B_n$ is accepted by a \wrdhm
	with~$\bigoof{1}$ states and~$\bigoof{n}$ working symbols.
\end{lemma}

\begin{proof}
	Let $\Sigma=\set{0,1,\$}$.
	We first describe an end-marked \dtm \mTM accepting the union of all~$B_i$'s, for~$i\geq0$,
	that has a constant number of states,
	then we show how \mTM can be modified in order to recognize~$B_n$, for a fixed integer~$n$,
	by bounding the number of visits to each cell,
	thus obtaining a \wrdhm with the desired properties.
	Let us define the working alphabet of \mTM as $\Gamma=\set{0,1,\$,x,f,\tblank}$.
	
	Let $w\in\Sigma^{\ast}$ be an input string of the form $w=v_1\$v_2\$\cdots \$v_k$,
	where~$v_1,\ldots,v_k\in\set{0,1}^{\ast}$,
	and $v_k=a_1\cdots a_{\ell}$, with $a_i\in \set{0,1}$ for~$i=1,\ldots,\ell$.
	The machine~\mTM performs~$\ell$ iterations.
	In each iteration it moves the head from the left endmarker
	to the right endmarker and back,
	thus visiting each input cell twice.
	It also rewrites some of the tape cells during this movement.
	The aim of the~$i$-th iteration is comparing the~$i$-th rightmost symbol of the last block
	with the~$i$-th rightmost symbol of any other block.
	This is implemented as follows.
	Within the first iteration,
	\mTM memorizes~$a_\ell$ in the states,
	rewrites it by the symbol~$x$,
	and moves the head leftwards.
	Whenever it encounters the symbol~$\$$ and enters the right end of a block~$v_j$,
	it checks if its last symbol equals~$a_{\ell}$.
	If so, \mTM overwrites the cell contents with~$x$, otherwise it writes~$f$.
	During the $i$-th iteration,
	\mTM memorizes~$a_{\ell+1-i}$ (which is in the rightmost input cell not containing the symbol $x$) in its finite control,
	overwrites the cell containing it by~$x$
	and checks whether the~$i$-th rightmost symbol of each $v_j$, with $j<k$, matches $a_{\ell+1-i}$
	(if so, it overwrites the symbol with~$x$, if not it writes~$f$).
	Notice that,
	at the beginning of the~$i$-th iteration, $i>1$,
	the~$i$-th rightmost symbol of a block is located immediately to the left
	of a nonempty factor consisting only of symbols~$x$ and~$f$.
	However, it could happen that for some~$j<k$ there is no $i$-th rightmost symbol in the factor~$v_j$,
	namely the block~$v_j$ is shorter than~$v_k$.
	In this case the machine halts and rejects.
	The input~$w$ is accepted by~\mTM
	\ifof, after some iteration,
	all symbols of~$v_k$ have been overwritten with~$x$
	and there is some~$v_j$ with all symbols also rewritten to~$x$
	(this ensures~$v_j=v_k$).
	A constant number of states is sufficient to implement the procedure so far described.


	It can be noticed that,
	for any fixed integer~$n$,
	a word belongs to~$B_n$ \ifof it is accepted by~\mTM within the first~$n$ iterations.
	Hence,
	as each iteration yields exactly two visits to each input cell,
	by bounding the number of visits to each cell by~$2n$,
	we can restrict~\mTM to accept words from~$B_n$ only.
	This can be obtained by using a construction similar as those used for proving \cref{lemma:constantly-many-visits}.
	We thus obtain a halting \wrdhm~\mHM accepting~$B_n$,
	which has~$\bigoof1$ states and~$\bigoof n$ working symbols.
\end{proof}

\begin{lemma}
\label{lemma:dfaBn}
	Each \dfa accepting $B_n$ has at least $2^{2^n}$ states.
\end{lemma}
\begin{proof}
	Let $\mathcal{S}$ be the family of all subsets of $\{0,1\}^n$.
	Given a subset $S=\{w_1,\ldots,w_k\}\in \mathcal{S}$,
	where $w_1<\cdots < w_k$ in the lexicographical order,
	consider the string $w(S)=w_1\$w_2\$\cdots \$w_k$.
	Let~$S_1$ and~$S_2$ be two different elements of~$\mathcal{S}$
	and let~$u\in\{0,1\}^n$ be a string which is in $S_1$ but not in $S_2$ (or vice versa).
	Then, $w(S_1)\$u\in B_n$ and $w(S_2)\$u\notin B_n$ (or vice versa),
	hence $\$u$ is a distinguishing extension, and, by the Myhill-Nerode Theorem,
	each \dfa accepting $B_n$ has at least $\card\mathcal{S}=2^{2^n}$ states.
\end{proof}

From \cref{th:wrdtm-to-1dfa-upper-bound,lemma:wrdtmBn,lemma:dfaBn}, we obtain:
\begin{corollary}
\label{cor:tradeOffs}
	The size trade-offs from \wrdtm\s and \wrdhm\s to \dfa\s are double exponential.
\end{corollary}

As shown in \cref{th:nonRec}, by dropping the weight-reducing assumption for machines, 
the size trade-offs in \cref{cor:tradeOffs} become not recursive.
However, provided an explicit linear bound on computation lengths,
we obtain the following result:
\begin{corollary}
	Let $\mTM=\struct{Q,\ialphab,\walphab,\delta,q_0,F}$ be a \dtm.
	If there exist two constants~$K$ and~$C$
	such that every computation of \mTM
	has length bounded by $Kn+C$,
	where~$n$ denotes the input length,
	then there exist an \nfa and a \dfa accepting $L(\mTM)$
	with \mbox{$2^{\bigoof{k\cdot\card\Gamma\cdot\log(\card Q)}}$}
	and \mbox{$2^{2^{\bigoof{k\cdot\card\Gamma\cdot\log(\card Q)}}}$} states, respectively,
	where~$k=2K\cdot(\card Q)^K+K$.
\end{corollary}
\begin{proof}
   Consequence of \cref{thm:linear->weight-reducing,th:wrdtm-to-1dfa-upper-bound}.
\end{proof}


\section{Weight-Reducing Machines: Space and Time Usage, Haltingness}
\label{sec:spaceTimeHalt}

Weight-reducing Turing machines generalize
weight-reducing end-marked Turing machines
(that are necessarily Hennie machines)
by allowing to use additional tape cells
that initially do not contain the input
and to which we refer as \defd{initially-blank cells}.
This extension allows in particular
infinite computations.
For instance, a~\wrdtm
can perform forward moves forever,
rewriting each blank cell with some symbol.
We now show that, however,
due to the weight-reducing property,
the amount of initially-blank cells
that is really useful,
\ie, that is visited in some halting computation,
is bounded by some constant
which can be computed from the size of the~\wrdtm
and does not depend on the input string.
This allows us to transform any \wrdtm
into an equivalent halting one of polynomial size,
which therefore operates in linear time.
Notice that \cref{th:wrdtm-to-1dfa-upper-bound}
already gave a simulation of \wrdtm\s
by a halting and linear-time computational model.

\begin{lemma}
\label{lemma:sameSequence}
	Each computation of a \wrdtm~\mTM which visits in the same sequence of states 
	two initially-blank cells of the tape, both located at the same side of the initial segment, is infinite.
\end{lemma}
\begin{proof}
	We give the proof in the case the two cells are located to the right of the initial segment.
	The proof for the other case can be obtained with a similar argument.
	For ease of exposition,
	we index the cell positions by integers,
	starting with the leftmost cell of the initial segment, whose index is~$1$,
	and we identify each cell with its position.
	
	Let us consider a configuration~$\mathcal C$ of~\mTM in which the head is scanning a tape cell~$c$,
	located in the portion of the tape to the right of the initial segment,
	containing a symbol~$a\in\Gamma$, the non-blank string written in the cells to the right of~$c$, starting from the cell~$c+1$,
	is~$\gamma\in(\Gamma\setminus\set{\tblank})^*$,\footnote{%
	Notice that when the cell~$c+1$ contains~$\tblank$, each cell~$c+h$ for~$h>0$ is not yet
	visited, hence~$\gamma=\emptyword$. For the same reason~$a=\tblank$ implies~$\gamma=\emptyword$.}
	and the state is~$q$.
	Thus~$\mathcal C=\config zq{a\gamma}$ for some~$z\in\Gamma^*$.
	Let us suppose that~$\delta(q,a)=(q',a',\xmove)$, with~$q'\in Q$, $a'\in\Gamma$, and~$\xmove\in\set{\lmove,\rmove}$.
	
	If~$\xmove=\rmove$, let us denote by~$path_R(q,a\gamma)$ the longest computation path which starts in the configuration~$\mathcal C$
	and, after~$\mathcal C$, visits only cells to the right of~$c$, possibly re-entering
	the cell~$c$ at the end. Notice that if the path does not re-enter the cell~$c$ then it could be
	infinite. Since~$\mTM$ is deterministic, we can observe the following facts:
	\begin{enumerate}[label={(\arabic*)},ref={(\arabic*)},nosep]
		\item\label{it1}The non-blank string which is written on the tape, 
		starting from cell~$c+1$, after the execution of~$path_R(q,a\gamma)$, if ending, only depends on~$q$, $a$, and~$\gamma$.
		\item\label{it2}For any fixed integer~$h>0$, the finite sequence of states which are reached when the head visits 
		the cell~$c+h$ during~$path_R(q,a\gamma)$, only depends on~$q$, $a$, $\gamma$, and~$h$.	
		This is also true in the case~$path_R(q,a\gamma)$ is infinite.
		Indeed, due to the fact that~\mTM is weight reducing, each cell can be visited only a finite number of times.
	\end{enumerate}
	These two facts will be now used in order to study 
	a computation visiting~$c$ in 
	configurations~$\mathcal C_0, \mathcal C_1,\ldots, \mathcal C_{k-1}$, $k\geq 1$.
	Let~$q_i, a_i, \gamma_i$ be
	the state, the symbol written in~$c$ and the non-blank contents of the cells to the right of~$c$
	in configuration~$\mathcal C_i$, respectively, $i=0,\ldots,k-1$.
	We are going to prove that, for each integer~$h>0$, the sequence~$P$ of states in which the cell~$c+h$
	is visited only depends on~$q_0,q_1,\ldots,q_{k-1}$ and~$h$. 
	
	Since the cell~$c+h$ cannot be visited before the cell~$c$, namely before configuration~$\mathcal C_0$,
	the sequence~$P$ can be decomposed as~$P=P_1P_2\cdots P_k$ where, for~$i=1,\ldots,k$, $P_i$ is the sequence of states which are reached when the head is visiting the cell~$c+h$ after the configuration~$\mathcal C_{i-1}$
	and, for~$i<k$, before the configuration~$\mathcal C_i$.
	
	Let us start by proving the following claims, for~$i=1,\ldots,k$:
	\begin{enumerate}[label={(C\arabic*)},ref={(C\arabic*)},nosep]
		\item\label{claim1}$P_i$ only depends on~$q_{i-1}$, $a_{i-1}$, $\gamma_{i-1}$, and~$h$,
		\item\label{claim2}if~$i<k$  then $a_i$ and~$\gamma_i$ only depend on~$q_{i-1}$, $a_{i-1}$ and~$\gamma_{i-1}$.
	\end{enumerate}
	To this end, first we observe that when~$i<k$,
	$\delta(q_{i-1},a_{i-1})=(q',a_i,d)$, for some state~$q'$, $d\in\{-1,+1\}$.
	There are two possibilities:
	\begin{itemize}
		\item $\xmove=\rmove$:
			in this case the cell~$c$ is re-entered in the state~$q_i$,
			\ie, $q_i$ is the last state in~$path_R(q_{i-1}, a_{i-1}\gamma_{i-1})$.
			Then~$\gamma_i$ is the string which is written in the cells to the right of~$c$ after executing~$path_R(q_{i-1}, a_{i-1}\gamma_{i-1})$ and, by~\ref{it1}, it depends only on~$q_{i-1}$, $a_{i-1}$, and~$\gamma_{i-1}$.
			Furthermore, by~\ref{it2}, the (possibly empty) sequence~$P_i$ of states in which the cell~$c+h$ 
			is visited along this path
			depends only on~$q_{i-1}$, $a_{i-1}$, $\gamma_{i-1}$, and~$h$.
		
		\item $\xmove=\lmove$:
			in this case, after a path which visits only cells to the left of~$c$,
			the cell~$c$ is re-entered in state~$q_i$.
			Since the contents of the cells to the right of~$c$ is not changed we have~$\gamma_i=\gamma_{i-1}$.
			Furthermore, $P_i$ is the empty sequence because the cell~$c+h$ was not reached in this path.
	\end{itemize}
	Even for~$i=k$, let~$\delta(q_{k-1},a)=(q',a',\xmove)$. We also have two possibilities:
	\begin{itemize}
		\item $\xmove=\rmove$:
			$path_R(q_{k-1}, a_{k-1}\gamma_{k-1})$ can be finite or infinite.
			By~\ref{it2}, the finite sequence~$P_k$ of states in which it visits~$c+h$ only depends on~$q_{k-1}$, $a_{k-1}$, $\gamma_{k-1}$, and~$h$.
		\item $\xmove=\lmove$:
			no more visits to the cell~$c+h$ are performed. Hence~$P_k$ is the empty sequence.
	\end{itemize}
	This completes the proof of~\ref{claim1} and~\ref{claim2}. Using these two statements and the fact that~$a_0=\tblank$ 
	and $\gamma_0=\emptyword$ are fixed, by proceeding in an inductive way,
	we obtain that~$P_i$ only depends on states~$q_0,q_1,\ldots,q_{i-1}$ and on~$h$, $i=1,\ldots,k$.
	
	This allows us to conclude that the sequence~$P=P_1P_2\cdots P_k$ of states reached at the cell~$c+h$
	depends only on the sequence of states~$q_0,q_1,\ldots,q_{k-1}$
	reached at the cell~$c$ and on~$h$, as we claimed.
	
	Suppose now that~$P$ coincides with the sequence~$q_0,q_1,\ldots,q_{k-1}$ of states visited at the cell~$c$.
	By iterating the previous argument,
	the sequence of states which are reached in any cell~$c+hj$ with~$j>0$ is~$q_0,q_1,\ldots,q_{k-1}$,
	thus implying the statement of the lemma.
\end{proof}

\begin{lemma}
	\label{lem:wr_space-bound}
	Let~\mTM be an~$n$\=/state \wrdtm
	which uses~$g$ working symbols.
	A computation of~\mTM is infinite
	if and only if it visits~$(n+1)^g$ consecutive initially-blank cells,
	\ie, tape cells to the left or to the right
	of the initial segment.
\end{lemma}
\begin{proof}
	Since~\mTM  is weight reducing, the number of visits to each tape cell is bounded by a
	constant which, in turn, is bounded by~$g$.
	Thus, each infinite computation should visit infinitely many tape cells,
	hence at least~$(n+1)^g$ consecutive initially-blank cells.
	
	To prove the converse, let us consider a halting computation~$\rho$ of~\mTM
	over an input word of length~$\ell$.
	By \cref{lemma:sameSequence}, $\rho$ cannot visit two tape cells, 
	laying at the same side of the initial segment, in the same sequence of states.
	Since there are less than~$(n+1)^g$
	nonempty distinct sequences of states of length at most~$g$,
	we conclude that
	the number of consecutive initially-blank cells
	visited during the computation
	is less than~$(n+1)^g$.
\end{proof}

As a consequence of the above result,
we obtain the following dichotomy of computations of \wrdtm\s.
\begin{proposition}
	\label{prop:wrdtm computation dichotomy}
	Each computation of a \wrdtm~\mTM
	either is infinite
	and visits an infinite amount of tape cells,
	or is finite,
	has length linearly bounded in the input length,
	and visits at most~$C$ initially-blank cells,
	for some constant~$C$ which depends only on~\mTM.
\end{proposition}
\begin{proof}
	Let~\mTM be an $n$-state \wrdtm having~$g$ working symbols,
	and let~$\rho$ be a computation of~\mTM over some input~$w$.
	If the amount of tape cells visited by~$\rho$ is~$k$, for some finite~$k$,
	then, as a cell cannot be visited more than~$g$ times
	by the weight-reducing property,
	$\rho$ is finite and has length bounded by~$gk$.
	Now, using \cref{lem:wr_space-bound}
	we have that~$\rho$ visits less than~$(n+1)^g$ initially-blank cells
	to the left (\resp, to the right) of the initial segment.
	Thus, $k<2(n+1)^g-1+\length w\in\bigoof{\length w}$.
	Conversely, if~$\rho$ visits infinitely many tape cells
	then it is necessarily infinite.
\end{proof}

\begin{proposition}
	\label{prop:wrtm->halting_wrtm}
	By a polynomial size increase,
	each \wrdtm can be transformed
	into an equivalent linear-time \wrdtm.
\end{proposition}
\begin{proof}
	From an~$n$-state \wrdtm~\mTM with a working alphabet of cardinality~$g$,
	we can build an equivalent halting \wrdtm~\mTM* which works as follows.
	After an initial phase
	during which~\mTM* marks~$(n+1)^g$ initially-blank cells
	to the left and to the right of the initial segment,
	it performs a direct simulation of~\mTM
	while controlling that no further cells
	than those initially marked and those of the initial segment
	are used.

	The initial phase is implemented
	using a counter in basis~$(n+1)$,
	stored on~$g$ consecutive tape cells,
	which is incremented up to~$(n+1)^g$
	and shifted along the tape.
	We shall describe a procedure marking the $(n+1)^g$ cells
	to the left of the initial segment.
	A similar procedure is repeated at the right of the initial segment.

	At each step, the counter contains the number of marked cells minus one.
	Hence, at the beginning, it is initialized to value~$g-1$ (in basis~$n+1$)
	by writing the corresponding digits onto the~$g$ cells immediately to the left of the initial segment,
	the least significant digit being on the rightmost of these cells.
	Then the counter is incremented and shifted leftward by updating each digit,
	from right to left, namely starting from the least significant one.
	Let~$d$ be the digit scanned by the head.
	The value~$d'=d+1\mod(n+1)$ is stored in the state control,
	along with a boolean variable~$c$,
	for carry propagation.
	Then the head is moved one cell to the left,
	$d'$ is written on the tape
	and its value updated according to the previous contents of the cell just overwritten
	and the value of~$c$\xspace%
	.
	After updating each of the~$g$ digits,
	the head is moved~$g$ positions to the right (on the least significant digit)
	and the counter is incremented again.
	This procedure stops when,
	moving rightward to reach the least significant digit,
	all the~$g$ scanned cells contain the symbol~$n$.
	It is possible to notice that
	the number of states used to implement this procedure is~$\bigoof{g+n}$.
	Furthermore,
	the number of visits to each initially-blank cell
	during this procedure is bounded by~$2g$
	because every cell is visited at most twice
	when containing the~$i$-th digit of the current counter value,
	for~$i=1,\ldots,g$.
	Thus, using \cref{lemma:constantly-many-visits},
	we can implement the procedure using
	a halting weight-reducing machine
	that uses~$\bigoof{gn}$ symbols.

	Once the space is marked,
	$\mTM'$ simulates~$\mTM$,
	stopping and rejecting if the simulation reaches a blank cell.
	Since~$\mTM$ is weight-reducing
	$\mTM'$ is weight-reducing as well.
	
	From \cref{lem:wr_space-bound}, we can easily conclude that
	each halting computation of~$\mTM$ is simulated by an equivalent
	halting computation of~$\mTM'$, while each infinite computation
	of~$\mTM$ is replaced in~$\mTM'$ by a computation which reaches
	a blank cell and then stops and rejects.
	Moreover,~$\mTM'$ uses~$\bigoof{g+n}$ states
	and~$\bigoof{gn}$ working symbols.
\end{proof}

Using \cref{lem:wr_space-bound,prop:wrtm->halting_wrtm},
we prove the following property.
\begin{theorem}
	\label{th:halting}
	It is decidable whether a \wrdtm halts on each input string.
\end{theorem}
\begin{proof}
	From any given \wrdtm~$\mTM$, we construct a halting \wrdtm~$\mTM'$ which, besides all the strings accepted by~$\mTM$, accepts all the strings
	on which~$\mTM$ does not halt. To this end, we can slightly modify the construction used to prove \cref{prop:wrtm->halting_wrtm},
	in such a way that when the head reaches a blank cell outside the initial segment and the initially marked space, the machine stops and accepts.
	Hence, the given \wrdtm~$\mTM$ halts on each input string if and only if the finite automata which are obtained from~$\mTM$ and~$\mTM'$ 
	according to \cref{th:wrdtm-to-1dfa-upper-bound} are equivalent.
\end{proof}

As a consequence:
\begin{corollary}
	It is decidable whether a \wrdtm works in linear time.
\end{corollary}
\end{document}

\section{Conclusion}
\label{sec:conclusion}

In this work,
we investigated deterministic one-tape Turing machines working in linear time.
Although these devices are known to be equivalent to finite automata~\cite{Hen65},
one cannot decide whether a given Turing machine is linear-time
even in the case of end-marked devices,
as we showed in \cref{th:undec}.
Furthermore, there is no recursive function bounding
the size blowup of the conversion of \dhm\s into \dfa\s (\cref{th:nonRec}).
To avoid these negative results,
we introduced the weight-reducing restriction,
that forces one-tape Turing machines
to work in linear time as long as they halt.
Indeed, we proved that each computation of a \wrdtm
either is infinite
or halts within a linear number of steps in the input length (\cref{prop:wrdtm computation dichotomy}).
The weight-reducing restriction is syntactic and can be checked (\cref{prop:is_wr?}).
Furthermore,
we proved in \cref{th:wrdtm-to-1dfa-upper-bound}
that each \wrdtm can be converted into an \nfa (\resp, \dfa)
whose size is exponential (\resp, doubly-exponential)
with respect to the size of the converted device.
These costs are tight~(\cref{cor:tradeOffs}).

Weight-reducing Turing machines
are not restrictions of linear-time machines.
Indeed, they allow infinite computations.
However, the haltingness of \wrdtm\s on any input can be decided (\cref{th:halting}).
Furthermore,
with a polynomial increase in size,
each weight reducing machine can be made halting and linear-time
(\cref{prop:wrtm->halting_wrtm}).
Still, \emph{halting} \wrdtm\s are not particular \dhm\s
as they allow the use of extra space
besides the initial segment
contrary to \dhm\s which are end-marked.
We do not know at the time of writing
whether this extra space usage
is useful for concisely representing regular languages.
In other words,
we leave open
the question of the size cost of turning \wrdtm\s
into equivalent \wrdhm\s.

In a related paper~\cite{GPPP21b},
we continue the investigation of computational models considered here
by focusing on the Sakoda and Sipser question
about the size cost of the determinization of two-way finite automata~\cite{Sak78}.
We indeed propose a new approach to this famous open problem,
which consists in converting two-way nondeterministic automata
into equivalent deterministic extensions of two-way finite automata,
paying a polynomial increase in size only.
The considered extensions are variants of linear-time deterministic Turing machines,
including \wrdhm\s and \dhm\s.

\bibliography{wrhm}
\end{document}